\documentclass[11pt,a4paper]{article}
\usepackage{a4wide}

\usepackage[utf8]{inputenc}
\usepackage[T1]{fontenc}
\usepackage{amsmath}
\usepackage{amssymb}
\usepackage{url}
\usepackage{graphicx}
\usepackage{here}
\usepackage{xparse}

\usepackage{thmtools}
\usepackage{authblk}
\usepackage{amsthm}

\newtheorem{theorem}{Theorem}
\newtheorem{lemma}{Lemma}
\newtheorem{fact}{Fact}

\newcommand{\Alp}[1][!]{\Sigma_{#1}}
\newcommand{\minpper}{\mathrm{p}}

\NewDocumentCommand{\pper}{o}{\mathrel{\IfNoValueTF{#1}{\parallel}{\parallel_{#1}}}}

\title{The Parameterized Periodicity Lemma}
\date{}

\author[1]{Rikuya~Hamai}
\author[2]{Yuto~Nakashima}
\author[2]{Shunsuke~Inenaga}

\affil[1]{Department of Information Science and Technology, Kyushu University, Japan}
\affil[ ]{\texttt{hamai.rikuya.226@s.kyushu-u.ac.jp}}
\affil[2]{Department of Informatics, Kyushu University, Japan}
\affil[ ]{\texttt{\{nakashima.yuto.003, inenaga.shunsuke.380\}@m.kyushu-u.ac.jp}}

\begin{document}

\maketitle

\begin{abstract}
    Fine and Wilf [Proc. Amer. Math. Soc. 1965] showed that any string of length at least $p+q-d$ with periods $p$ and $q$ also has period $d=\gcd(p,q)$. For parameterized strings, Apostolico and Giancarlo [Discrete Appl. Math. 2008] proved an analogue with length bound $p+q$, assuming that the two induced bijections commute. Ideguchi et al. [SPIRE 2023] removed this assumption and gave the bound $p+q+\min(p,q)(\sigma-1)$, where $\sigma$ is the number of distinct letters.
    This was later improved by Hamai et al. [SPIRE 2024] to $p+q+\min(p,q)(\sigma-2)$, which was used to bound the number of non-equivalent parameterized squares.
    In this paper, we establish the optimal Fine--Wilf type bound for parameterized strings.
    Namely, if a string $s$ containing $\sigma$ distinct letters has parameterized periods $p$ and $q$ and satisfies
    $|s| \ge p+q+(\sigma-3)d+1$, where $d=\gcd(p,q)$,
    then $d$ is also a parameterized period of $s$.
    We also give matching lower-bound instances, proving that our bound is optimal for any $\sigma \geq 2$.
\end{abstract}

\section{Introduction}

Periodicity is one of the central notions in combinatorics on words and string algorithms. For a string $s$ of length $n$, an integer $p$ is a period of $s$ if $s[i]=s[i+p]$ holds for every $0\le i<n-p$. A classical theorem of Fine and Wilf states that if a string has periods $p$ and $q$ and its length is at least $p+q-\gcd(p,q)$, then $\gcd(p,q)$ is also a period~\cite{Fine1965}. This theorem shows that when two period constraints coexist on a sufficiently long string, they force a finer common period. Periodicity is also a fundamental notion behind repetitions and runs, classical objects in string algorithms~\cite{Bannai2017,CrochemoreIlie2008,Crochemore1981,Kolpakov1999,MainLorentz1984}.

Parameterized matching, introduced by Baker~\cite{Baker1993,Baker1996}, is a classical extension of exact string matching in which strings are compared up to a consistent renaming of alphabet symbols. For two strings $x,y$ of the same length and a bijection $f$ on the alphabet, we say that $x$ and $y$ parameterized-match under $f$ if $y[i]=f(x[i])$ holds for every position $i$. This model captures structural equality rather than literal equality, and was motivated in part by duplicated-code detection and software maintenance~\cite{Baker1997}. Parameterized matching has since been studied in several algorithmic settings, including alphabet-dependent algorithms, multiple-pattern matching, parameterized indexing data structures, compact indexes, and streaming algorithms~\cite{AmirFM94,FujisatoNIBT19,IduryS96,IseriIHKYS24,JalseniusPS13,NakashimaFHNYIB22}.

Using this notion, parameterized periodicity is defined as a direct analogue of ordinary periodicity. Let $s$ be a string of length $n$. A shift $p$ is a parameterized period of $s$ if there exists a bijection $f$ on the alphabet such that $s[i+p]=f(s[i])$ holds for every $0\le i<n-p$. Thus, unlike ordinary periods, parameterized periods are accompanied by bijections on the alphabet. This additional algebraic structure makes the interaction between two parameterized periods substantially more delicate, since two periods may be realized by different, possibly non-commuting bijections.

The study of periodicity and repetitions in parameterized strings was initiated by Apostolico and Giancarlo~\cite{Apostolico2008}. They showed that if $p$ and $q$ are parameterized periods of $s$ realized by bijections $f$ and $g$, respectively, and if $f$ and $g$ commute, then $d=\gcd(p,q)$ is also a parameterized period of $s$ provided that $|s|\ge p+q$. Later, Ideguchi et al. gave a parameterized periodicity lemma that does not assume commutativity of the bijections~\cite{Ideguchi2023}. Their result implies that, when $s$ contains $\sigma$ distinct letters, the condition
$|s|\ge p+q+\min(p,q)(\sigma-1)$
is sufficient to guarantee that $d$ is a parameterized period. Hamai et al. further improved this bound to
$|s|\ge p+q+\min(p,q)(\sigma-2)$
and used it in their analysis of non-equivalent parameterized squares~\cite{Hamai2024}. Periodicity lemmas naturally enter this analysis, since overlaps of parameterized squares lead to situations where several parameterized periods coexist.

The remaining dependence on $\min(p,q)$ is not merely a quantitative issue. The value $d=\gcd(p,q)$ captures the common arithmetic structure forced by the two shifts, whereas $\min(p,q)$ can be arbitrarily larger than $d$. Thus, replacing $\min(p,q)$ by $d$ recovers the Fine--Wilf type dependence on the true common period scale in the parameterized setting.

In this paper, we prove an optimal parameterized periodicity lemma. Let $s$ be a string containing $\sigma$ distinct letters, and let $p$ and $q$ be parameterized periods of $s$. Let $d=\gcd(p,q)$. We show that $d$ is also a parameterized period of $s$ whenever
$|s| \ge p+q+(\sigma-3)d+1$.
This gives the Fine--Wilf type dependence on the gcd scale. Moreover, the bound is tight: for every $\sigma\ge 2$, we construct strings of length
$p+q+(\sigma-3)d$
having parameterized periods $p$ and $q$ but not parameterized period $d$.

\section{Preliminaries}

\subsection{Strings}
Let $\Sigma$ be an {\em alphabet}.
An element of $\Sigma^*$ is called a {\em string}.
The length of a string $s$ is denoted by $|s|$.
The empty string $\varepsilon$ is the string of length 0.
Let $\Sigma^+$ be the set of non-empty strings,
i.e., $\Sigma^+ = \Sigma^* \setminus \{\varepsilon \}$.
For any strings $x$ and $y$,
let $x \cdot y$ (or sometimes $xy$) denote the concatenation of the two strings.
For a string $s = xyz$, $x$, $y$ and $z$ are called
a \emph{prefix}, \emph{substring}, and \emph{suffix} of $s$, respectively.
The $i$-th symbol of a string $w$ is denoted by $w[i]$, where $0 \leq i < |w|$.
For a string $w$ and two integers $0 \leq i \leq j < |w|$,
let $w[i..j]$ denote the substring of $w$ that begins at position $i$ and ends at
position $j$. For convenience, let $w[i..j] = \varepsilon$ when $i > j$.
Also,
let
$w[..i]=w[0..i]$,
$w[i..]=w[i..|w|-1]$, and
$w[i..j] = w(i-1..j] = w[i..j+1)$.
For any string $w$, let $w^1 = w$ and let $w^k = ww^{k-1}$ for any integer $k \ge 2$.
For a string $w$, let $\Sigma_w$ denote the set of distinct letters
occurring in $w$.

\subsection{Periods}
Let $s$ be a string of length $n$.
An integer $p$ with $1 \le p \le n$ is called a period of $s$ if
$s[i]=s[i+p]$ for every $0 \le i < n-p$.
Equivalently,
$s[0..n-p) = s[p..n)$.

\begin{lemma}[Theorem~1 of~\cite{Fine1965}]\label{lem:periodicity}
    Let $s$ be a string such that $p$ and $q$ are periods of $s$.
    If $|s| \ge p+q-\gcd(p,q)$, then $\gcd(p,q)$ is a period of $s$.
\end{lemma}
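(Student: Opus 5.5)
We prove the Fine--Wilf lemma in its original setting by the standard Euclidean-style induction on $p+q$, so that the argument can later be adapted to the parameterized setting. Write $d=\gcd(p,q)$ and assume without loss of generality that $p\le q$. If $p=q$, then $d=p$ and there is nothing to prove. So assume $p<q$ and set $r=q-p$. Note that $\gcd(p,r)=d$.

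The key step is to show that $s'=s[0..|s|-p)$, the prefix of $s$ of length $|s|-p\ge q-d$, has periods $p$ and $r$.
\begin{itemize}
\item Period $p$ is inherited by every substring of $s$.
\item For period $r$, take any $0\le i<|s|-p-r=|s|-q$. Then $i+q<|s|$, so $s[i]=s[i+q]$ by period $q$. Also $i+r+p=i+q<|s|$, so $s[i+r]=s[i+r+p]=s[i+q]$ by period $p$. Hence $s[i]=s[i+r]$.
\end{itemize}
The lengths fit the induction: $|s'|\ge q-d=p+r-d$. By the induction hypothesis on the smaller sum $p+r<p+q$, $s'$ has period $d$.

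It remains to lift period $d$ from $s'$ back to $s$. Since $|s'|\ge q-d\ge p$ (because $q-p\ge d$), every position $j\ge|s'|$ of $s$ satisfies $j-p\ge |s'|-p\ge0$, and $s[j]=s[j-p]$. Repeatedly subtracting $p$ maps any position of $s$ into $s'$, so the letter at each position of $s$ is determined by its class modulo $p$, read inside $s'$. Now compare positions $i$ and $i+d$ of $s$. Reduce both by the same multiple of $p$ until the larger index, $i+d$, lies in $s'$. Then both indices lie in $s'$, and since $d\mid p$ their difference is still $d$. Period $d$ of $s'$ then gives $s[i]=s[i+d]$.

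The step to check most carefully is the base case and the boundary bookkeeping. The recursion bottoms out when one period divides the other, i.e. $p=d$. In that case we need $p$ to be a period of a string that already has period $p$, which is trivial. One must also confirm that the length bound exactly matches $p+r-d$ at each step and that no off-by-one occurs. Alternatively, the bound can be checked via the graph on positions $0,\dots,|s|-1$ with edges $\{i,i+p\}$ and $\{i,i+q\}$: that graph has exactly $d$ connected components once $|s|\ge p+q-d$, which gives the same conclusion directly.
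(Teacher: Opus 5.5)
The paper gives no proof of this lemma: it is quoted from Fine and Wilf and used as a black box (only in the $\sigma=2$ case, on the strings $v_r$). Your Euclidean induction is the standard proof, and most of it is sound. The prefix $s'=s[0..|s|-p)$ has length at least $q-d\ge\max(p,r)$, so $p$ and $r=q-p$ are legitimate periods of $s'$ even under the paper's convention that a period may not exceed the length. Your check of period $r$ is correct, and $|s'|\ge p+r-d$ with $p+r=q<p+q$ makes the induction hypothesis applicable.

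The gap is in the lifting step, which is exactly the off-by-one you warned about but did not check. You reduce $i$ and $i+d$ by the same multiple of $p$ until $i+d$ lies in $s'$. When $i<p\le|s'|\le i+d$ this breaks down: reducing by $0$ leaves $i+d$ outside $s'$, and reducing by $p$ makes $i$ negative. This case really occurs. It forces $|s'|<p+d$, hence $q=p+d$ and $|s|<p+q$. For example, take $p=2$, $q=3$, $|s|=4$: then $s'=s[0..2)$ and $i=1$ gives $i+d=2\notin s'$ while $i-p=-1$.

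The repair follows from your own preceding observation, applied by reducing each index modulo $p$ separately rather than shifting both by a common amount:
\begin{itemize}
\item Since $s$ has period $p$, we have $s[j]=s[j\bmod p]$ for every position $j$, and $j\bmod p<p\le|s'|$.
\item Since $s'$ has period $d$, positions of $s'$ that are congruent modulo $d$ carry the same letter.
\item Since $d\mid p$, we have $j\bmod p\equiv j\pmod d$.
\end{itemize}
Hence $s[j]$ depends only on $j\bmod d$, which gives $s[i]=s[i+d]$ for all $0\le i<|s|-d$. In the problematic case concretely, $s[i+d]=s[i+d-p]$, and the positions $i$ and $i+d-p$ both lie in $s'$ and differ by $p-d$, a multiple of $d$. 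The closing remark about the position graph having $d$ components is only asserted, so it does not close the gap by itself.
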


\subsection{Parameterized equivalence}
Two strings $x$ and $y$ of length $k$ each are said to be \emph{parameterized equivalent} 
iff there is a bijection $f$ on $\Sigma$ such that $f(x[i]) = y[i]$ for all $0 \leq i < k$.
For instance,
let $\Sigma = \{\mathtt{a}, \mathtt{b}, \mathtt{c}, \mathtt{d}\}$,
and consider two strings $x = \mathtt{aabcacbbdad}$ and $y = \mathtt{bbcabaccdbd}$.
These two strings are parameterized equivalent, since $x$ can be transformed to $y$ by applying a bijection $f$
such that $f(\mathtt{a}) = \mathtt{b}$, $f(\mathtt{b}) = \mathtt{c}$,
$f(\mathtt{c}) = \mathtt{a}$, and $f(\mathtt{d}) = \mathtt{d}$ to the characters in $x$.
We write $x \approx y$ iff $x$ and $y$ are parameterized equivalent.

\subsection{Parameterized periods}
Let $s$ be a string of length $n$.
An integer $p$ with $1 \le p \le n$ is said to be a parameterized period,
or a p-period, of $s$ if
$s[0..n-p) \approx s[p..n)$.
For any p-period $p$ of $s$, we write $p \pper[f] s$ if
$s[0..n-p) \approx s[p..n)$ holds by a bijection $f$.
We sometimes drop the subscript $f$ and write $p \pper s$ when no confusion occurs.
The smallest p-period of $s$ is denoted by $\minpper(s)$.
By the definition of p-periods, we obtain the following facts which we will use in our proof.

\begin{fact}\label{fact:restrict-alphabet}
Let $s$ be a string and let $p$ be a p-period of $s$. Then an associated
bijection for $p$ can be chosen as a permutation of $\Sigma_s$. 
When the entire alphabet $\Sigma$ is under consideration, the values on $\Sigma \setminus \Sigma_s$ are
irrelevant to the period condition and can be chosen arbitrarily.
\end{fact}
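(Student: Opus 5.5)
The argument is a direct unpacking of the definition of a p-period. Suppose $p \pper[f] s$ for some bijection $f$ on $\Sigma$, so that $f(s[i]) = s[i+p]$ for every $0 \le i < n-p$, where $n=|s|$. I will show that $f$ can be replaced by a permutation $f'$ of $\Sigma_s$ that satisfies the same equalities.

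First I identify the set of letters on which $f$ is actually constrained. Let $A = \{ s[i] : 0 \le i < n-p\}$, the set of letters occurring in the prefix $s[0..n-p)$. Let $B = \{ s[i] : p \le i < n\}$, the set of letters occurring in the suffix $s[p..n)$. Both $A$ and $B$ are subsets of $\Sigma_s$. The period condition involves $f$ only on $A$, and $f$ maps $A$ onto $B$. Since $f$ is injective, $f|_A \colon A \to B$ is a bijection, which gives $|A|=|B|$.

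Next I extend $f|_A$ to a permutation of $\Sigma_s$. The set $\Sigma_s$ is finite, so $|\Sigma_s\setminus A| = |\Sigma_s| - |A| = |\Sigma_s| - |B| = |\Sigma_s \setminus B|$. I therefore choose an arbitrary bijection $h\colon \Sigma_s\setminus A \to \Sigma_s\setminus B$ and set $f' = f|_A \cup h$. This $f'$ is a bijection of $\Sigma_s$ onto itself. It agrees with $f$ on every letter $s[i]$ with $i<n-p$, so it witnesses $s[0..n-p)\approx s[p..n)$. This proves the first claim.

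For the second claim, let $g$ be any permutation of $\Sigma_s$ that witnesses the period, for instance the $f'$ just built. Extend $g$ to all of $\Sigma$ by taking an arbitrary permutation of $\Sigma\setminus\Sigma_s$. The result is a bijection on $\Sigma$. It witnesses the period because the defining equalities only evaluate the map at letters of $A\subseteq\Sigma_s$. Consequently the values on $\Sigma\setminus\Sigma_s$ are irrelevant.

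No step is a real obstacle. The only point needing care is the cardinality count $|\Sigma_s\setminus A| = |\Sigma_s\setminus B|$, which relies on finiteness of $\Sigma_s$ and injectivity of $f$. Note that $A\neq B$ in general, so $f$ itself need not map $\Sigma_s$ into $\Sigma_s$. That is exactly why the restriction-and-extension step is needed rather than simply restricting $f$ to $\Sigma_s$.
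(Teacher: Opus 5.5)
Your proof is correct and is the same definitional argument the paper relies on; the paper states this fact without proof, saying only that it follows from the definition of p-periods. Your restriction-and-extension step supplies the one detail the paper leaves implicit. You restrict $f$ to the letters $A$ of $s[0..n-p)$, which it maps bijectively onto the letters $B$ of $s[p..n)$, and then match $\Sigma_s\setminus A$ with $\Sigma_s\setminus B$ using $|A|=|B|$. This handles the fact that the original $f$ need not preserve $\Sigma_s$.
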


\begin{fact}[cf.~\cite{SCER-period}]
    Let $s$ be a string that satisfies $p \pper[f] s$.
    For any position $i$ satisfying $0 \leq i < |s|-p$, $f(s[i]) = s[i+p]$.
\end{fact}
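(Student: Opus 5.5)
The statement to prove is the Fact: if $p \pper[f] s$, then $f(s[i]) = s[i+p]$ for every $0 \le i < |s|-p$. This follows almost directly from unpacking the definitions, so the plan is short.

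First, I recall that $p \pper[f] s$ means $s[0..n-p) \approx s[p..n)$ via the bijection $f$, where $n=|s|$. By the definition of parameterized equivalence, this says $f(x[j]) = y[j]$ for all $0 \le j < n-p$, with $x = s[0..n-p)$ and $y = s[p..n)$.

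Second, I translate the indices back to $s$. By the substring notation in the preliminaries, $x[j] = s[j]$ and $y[j] = s[j+p]$ for $0 \le j < n-p$. Substituting $j=i$ gives $f(s[i]) = s[i+p]$ for every $i$ with $0 \le i < n-p$, which is exactly the claim.

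No step is a genuine obstacle. The only point that needs care is index bookkeeping: I must confirm that the range $0\le i<|s|-p$ matches the common length $n-p$ of the two substrings, and that $y=s[p..n)$ starts at offset $p$. The boundary case $p=n$ is vacuous, since both substrings are empty and the range of $i$ is empty. The argument uses no properties of $f$ beyond its defining equalities, and Fact~\ref{fact:restrict-alphabet} is not needed.
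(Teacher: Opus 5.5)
Your proposal is correct and matches the paper, which gives no separate proof and presents this Fact as an immediate consequence of the definition of p-periods: unpacking $s[0..n-p) \approx s[p..n)$ via $f$ yields the pointwise equalities $f(s[i]) = s[i+p]$. Your index bookkeeping, including the vacuous case $p = n$, is fine.
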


Apostolico and Giancarlo~\cite{Apostolico2008} showed a variant of the periodicity lemma on the parameterized equivalence model.

\begin{lemma}[Lemma~3 of~\cite{Apostolico2008}] \label{lem:prev-p-perlem1}
    Let $s$ be a string that satisfies $p \pper[f] s$ and $q \pper[g] s$.
    If $p+q \leq |s|$ and $f \circ g = g \circ f$, then $\gcd(p, q) \pper s$.
\end{lemma}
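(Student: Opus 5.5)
The plan is to imitate Euclid's algorithm, exactly as in the classical proof of Lemma~\ref{lem:periodicity}. The induction is on $p+q$, and the invariant carried along is \emph{``two p-periods whose bijections commute, with their sum at most $|s|$''}.

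If $p=q$ there is nothing to prove, since then $\gcd(p,q)=p$ is a p-period. So assume without loss of generality that $p>q$, and let $n=|s|$. The central step is to show that $p-q \pper[h] s$ with $h=f\circ g^{-1}$, which is again a bijection on $\Sigma$. Concretely, I will verify $s[i+p-q]=h(s[i])$ for every $0\le i<n-(p-q)$, splitting into two cases according to the position $i$:
\begin{itemize}
\item If $i\ge q$, then $s[i]=g(s[i-q])$ and $s[i-q+p]=f(s[i-q])$. The second identity is valid because $i-q+p<n$. Eliminating $s[i-q]$ gives $s[i+p-q]=f(g^{-1}(s[i]))$.
\item If $i<q$, then $i+p<p+q\le n$, so both $s[i+p]=f(s[i])$ and $s[i+p]=g(s[i+p-q])$ are available. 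Hence $s[i+p-q]=g^{-1}(f(s[i]))$, and this equals $f(g^{-1}(s[i]))$ because $f$ and $g$ commute.
\end{itemize}
The second case is the only place where the hypothesis $p+q\le|s|$ and the commutativity are genuinely used, and I expect it to be the main point to get right.

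Next I check that the pair $(p-q,q)$, with bijections $(h,g)$, again satisfies the hypotheses of the lemma:
\begin{itemize}
\item The bijections commute, since $h\circ g=f\circ g^{-1}\circ g=f$, while $g\circ h=g\circ f\circ g^{-1}=f\circ g\circ g^{-1}=f$, where the last step uses that $f$ and $g$ commute.
\item The length bound holds, since $(p-q)+q=p\le n$.
\item $\gcd(p-q,q)=\gcd(p,q)$.
\end{itemize}
Since $(p-q)+q<p+q$, the induction hypothesis applies to $(p-q,q)$ and yields $\gcd(p,q)\pper s$.

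The remaining care is routine: $1\le p-q\le n$, so $p-q$ is a legitimate p-period; and by Fact~\ref{fact:restrict-alphabet} nothing changes if the bijections are regarded as permutations of $\Sigma$, so that inverses and compositions are well defined.
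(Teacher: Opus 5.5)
Your proof is correct. Both index cases check out: for $i\ge q$ the two period applications at $i-q$ are valid because $i-q<n-p$, and for $i<q$ the bound $i+p<p+q\le n$ makes both applications landing at $i+p$ legitimate. The invariant (commuting bijections, sum of shifts at most $|s|$) is preserved, so strong induction on $p+q$ closes the argument.

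The paper gives no proof of this lemma; it imports it as Lemma~3 of Apostolico and Giancarlo. There is therefore no in-paper argument to compare against, and your Euclidean descent with $h=f\circ g^{-1}$, modelled on the classical Fine--Wilf proof, is a complete self-contained proof.

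Two small remarks:
\begin{itemize}
\item Commutativity is used in two places, not only in the case $i<q$. You also need it to get $g\circ h=h\circ g$, which you do check.
\item Tracking the bijections through the descent shows that the final one is $f^{u}\circ g^{v}$ for integers $u,v$ with $up+vq=\gcd(p,q)$. This is a single, position-independent bijection. It contrasts with the position-dependent maps $h_i$ the paper must build in Lemma~\ref{lem:period-d} when commutativity is not available.
\end{itemize}
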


The above lemma uses the commutativity of the two bijections.
Hamai et al.~\cite{Hamai2024} proposed a lemma regarding the commutativity of the two bijections.

\begin{lemma}[Lemma~5 of~\cite{Hamai2024}] \label{lem:permutation}
    Assume that $|\Sigma| \geq 2$.
    Let $\Sigma'$ be a subset of $\Sigma$ satisfying $|\Sigma'| = |\Sigma|-2$.
    For any permutations $f$ and $g$ of $\Sigma$,
    if $(f \circ g) (a) = (g \circ f) (a)$ for all $a \in \Sigma'$,
    then $f$ and $g$ commute.
\end{lemma}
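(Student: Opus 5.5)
Write $h = f\circ g$ and $k = g\circ f$. The plan is a parity argument. Assuming $h$ and $k$ agree on $\Sigma'$, either they agree everywhere, or they differ by exactly one transposition. The second case contradicts the fact that $h$ and $k$ are conjugate permutations. Throughout, $\Sigma$ is taken to be finite, since parity of permutations is used.

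First, I would locate where $h$ and $k$ can differ. Let $\Sigma\setminus\Sigma' = \{a,b\}$ with $a\neq b$; this is possible since $|\Sigma|\ge 2$ and $|\Sigma'|=|\Sigma|-2$. Both $h$ and $k$ are bijections of $\Sigma$ and $h(\Sigma')=k(\Sigma')$ pointwise. Hence $h(\{a,b\}) = k(\{a,b\}) = \Sigma\setminus h(\Sigma')$. So there are only two possibilities:
\begin{itemize}
\item $h(a)=k(a)$ and $h(b)=k(b)$, in which case $h=k$ and we are done;
\item $h(a)=k(b)$, $h(b)=k(a)$, and $h(a)\neq k(a)$.
\end{itemize}

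Next, I would rule out the second case. Consider $\tau = h^{-1}\circ k$. It fixes every element of $\Sigma'$, since $k(c)=h(c)$ for $c \in \Sigma'$. It also maps $a\mapsto h^{-1}(k(a)) = b$ and $b\mapsto a$. Thus $\tau$ is the transposition $(a\ b)$, so $k = h\circ(a\ b)$, and $h$ and $k$ have opposite parity.

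On the other hand, $k = g\circ f = f^{-1}\circ(f\circ g)\circ f = f^{-1}\circ h\circ f$. So $k$ is conjugate to $h$, and conjugate permutations have the same sign; in particular, $\operatorname{sgn}(k)=\operatorname{sgn}(h)$. This contradicts the opposite parity, so the second case cannot occur, and therefore $f\circ g = g\circ f$.

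The only delicate step is recognizing that the single possible discrepancy is a transposition. Parity via conjugacy then finishes the argument immediately. The degenerate case $|\Sigma|=2$, where $\Sigma'=\emptyset$, is handled by the same argument without change.
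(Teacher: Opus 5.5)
Your proof is correct and complete for finite $\Sigma$. The paper gives no proof of this statement: it is quoted as Lemma~5 of Hamai et al.\ and used only as a black box, in Step~1 of the $\sigma\ge 3$ case and applied to the finite alphabet $\Sigma_s$. So there is no in-paper argument to compare against.

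Your argument runs as follows. Agreement on $\Sigma'$ forces $(f\circ g)^{-1}\circ(g\circ f)$ to fix $\Sigma'$ pointwise. It is therefore either the identity or the transposition of the two leftover letters. The transposition is impossible because $g\circ f=f^{-1}\circ(f\circ g)\circ f$ has the same sign as $f\circ g$.

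A slightly more direct way to say the same thing is that $(f\circ g)^{-1}\circ(g\circ f)=g^{-1}\circ f^{-1}\circ g\circ f$ is a commutator, hence even. Alternatively, $\operatorname{sgn}(f\circ g)=\operatorname{sgn}(f)\operatorname{sgn}(g)=\operatorname{sgn}(g\circ f)$.

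Your explicit finiteness assumption is genuinely needed. On an infinite set a transposition can be a commutator, so the statement would fail there. For example, on a countable set, a permutation $h$ with infinitely many $2$-cycles and infinitely many infinite cycles is conjugate to $h\circ\tau$ for a suitable transposition $\tau$, say $h\circ\tau=f^{-1}\circ h\circ f$. Taking $g=f^{-1}\circ h$ then gives $f\circ g$ and $g\circ f$ that agree off two points but are not equal. This is harmless for the paper, which only uses the lemma on $\Sigma_s$, and the hypothesis $|\Sigma'|=|\Sigma|-2$ already presupposes finiteness.
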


Recently, Ideguchi et al.~\cite{Ideguchi2023} proposed the following variant of the parameterized periodicity lemma which does not assume commutativity of the bijections:

\begin{lemma}[Lemma~5 of~\cite{Ideguchi2023}] \label{lem:prev-p-perlem2}
    Let $s$ be a string that satisfies $p \pper s$ and $q \pper s$.
    If $p+q +\min(p,q) \cdot (|\Alp[s]|-1) \leq |s|$, then $\gcd(p, q) \pper s$.
\end{lemma}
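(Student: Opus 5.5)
We would reduce the statement to Lemma~\ref{lem:prev-p-perlem1} by showing that the two bijections can be chosen to commute. The tool for this is Lemma~\ref{lem:permutation}. By Fact~\ref{fact:restrict-alphabet}, let $f$ and $g$ be the permutations of $\Sigma_s$ with $p \pper[f] s$ and $q \pper[g] s$, and set $\sigma = |\Alp[s]|$ and $n=|s|$. If $\sigma \le 1$, then $s$ is unary and every shift is a (p-)period, so there is nothing to prove. Assume $\sigma\ge 2$, and by symmetry let $p=\min(p,q)$. The hypothesis gives $n \ge p+q$, so once commutativity is established Lemma~\ref{lem:prev-p-perlem1} yields $\gcd(p,q)\pper s$.

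\textbf{Step 1 (local commutation).} For every position $i$ with $i+p+q<n$, applying the period facts twice in both orders gives
\[
g(f(s[i])) = g(s[i+p]) = s[i+p+q] = f(s[i+q]) = f(g(s[i])).
\]
Hence $f\circ g$ and $g\circ f$ agree on every letter occurring in the prefix $s[0..n-p-q)$. By the hypothesis this prefix has length at least $p(\sigma-1)$.

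\textbf{Step 2 (the prefix contains enough letters).} This is the step needing care. Cut $s$ into blocks $B_j = s[jp..(j+1)p)$, the last one possibly partial, and let $U_k$ be the set of letters occurring in $B_0\cdots B_k$. Since $s[i+p]=f(s[i])$, we have $f(U_{k-1}) \supseteq$ the letters of $B_1\cdots B_k$. More precisely, the letters of $B_1\cdots B_k$ are exactly $f$ of the letters of the first $kp$ positions, which form $U_{k-1}$. Thus $f(U_{k-1})\subseteq U_k$ and $|f(U_{k-1})|=|U_{k-1}|$.

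Now suppose $U_k=U_{k-1}$ for some $k$. Then $f(U_{k-1})\subseteq U_{k-1}$ with equal cardinality, so $f(U_{k-1})=U_{k-1}$. By induction every later block, including a partial last block, uses only letters of $U_{k-1}$, and therefore $U_{k-1}=\Sigma_s$. Consequently $|U_k|\ge \min(k+1,\sigma)$. In particular, the first $p(\sigma-1)$ positions, that is $U_{\sigma-2}$, contain at least $\sigma-1\ge\sigma-2$ distinct letters.

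\textbf{Step 3 (conclude).} Choose $\Sigma'\subseteq U_{\sigma-2}$ with $|\Sigma'|=\sigma-2$. By Step~1, $f\circ g$ and $g\circ f$ agree on $\Sigma'$. Lemma~\ref{lem:permutation}, applied to the permutations $f,g$ of $\Sigma_s$, then shows that $f$ and $g$ commute. Finally, Lemma~\ref{lem:prev-p-perlem1} with $n\ge p+q$ gives $\gcd(p,q)\pper s$.

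The main obstacle is Step~2: the naive claim that each new block brings a new letter is false. The invariant-set argument via $f(U_{k-1})\subseteq U_k$ and cardinality is what makes the counting work. The argument in fact only needs length $p+q+p(\sigma-2)$, which is consistent with the improved bound of Hamai et al.
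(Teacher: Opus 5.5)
Your proof is correct, including your observation that it only needs length $p+q+p(\sigma-2)$. The paper does not prove this statement; it quotes it from Ideguchi et al.\ as background, so there is no in-paper proof to match step by step.

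The closest comparison is Step~1 of the paper's proof of Theorem~\ref{thm:periodicity} for $\sigma\ge3$, which uses exactly your template. There, $f\circ g$ and $g\circ f$ agree on the letters of a prefix, Lemma~\ref{lem:permutation} turns enough such letters into commutativity, and Lemma~\ref{lem:prev-p-perlem1} finishes.

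The real difference is where the letter count comes from. Your Step~2 re-derives, for prefixes and via the invariant-set argument $f(U_{k-1})\subseteq U_k$, what Lemma~\ref{lem:substr-character-occ} already gives. Apply that lemma to the prefix $s[0..|s|-p-q)$ with $k=\sigma-1$: its length is at least $p(\sigma-1)\ge p(\sigma-2)+1$, so it immediately has at least $\sigma-1$ distinct letters. The step you flag as the main obstacle is therefore covered by a lemma already stated in the paper. Your self-contained version is fine and has the merit of showing why that lemma holds.

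Conversely, at the bound in this statement you do not need Lemma~\ref{lem:permutation}. Once $f\circ g$ and $g\circ f$ agree on $\sigma-1$ of the $\sigma$ letters of $\Alp[s]$, they also agree on the last one, since both are permutations of $\Alp[s]$. So commutativity follows elementarily. Using Lemma~\ref{lem:permutation} instead needs only $\sigma-2$ letters, which is why your argument actually yields the stronger Lemma~\ref{lem:prev-p-perlem3}.
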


Moreover, Hamai et al.~\cite{Hamai2024} improved Lemma~\ref{lem:prev-p-perlem2}.

\begin{lemma}[Lemma~6 of~\cite{Hamai2024}]\label{lem:prev-p-perlem3}
  Let $s$ be a string that satisfies $p \pper s$ and $q \pper s$.
  If $p + q + \min(p, q) \cdot (|\Alp[s]| - 2) \leq |s|$, then $\gcd(p, q) \pper s$ holds.
\end{lemma}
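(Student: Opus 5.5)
We may assume $p \neq q$, since for $p=q$ the claim is trivial. Let $n=|s|$, $\sigma=|\Alp[s]|$, and let $p \pper[f] s$ and $q \pper[g] s$. By Fact~\ref{fact:restrict-alphabet}, we take $f$ and $g$ to be permutations of $\Alp[s]$. The plan is to show that $f$ and $g$ commute on $\Alp[s]$. Then Lemma~\ref{lem:prev-p-perlem1} applies, because $n \ge p+q$ holds trivially under our hypothesis, and it gives $\gcd(p,q) \pper s$. The cases $\sigma \le 1$ are trivial, so assume $\sigma \ge 2$ and, without loss of generality, $p<q$, so that $\min(p,q)=p$.

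\textbf{Step 1: local commutation.} For every position $i$ with $0 \le i < n-p-q$, applying the periods in both orders gives
\[
g(f(s[i])) = g(s[i+p]) = s[i+p+q] = f(s[i+q]) = f(g(s[i])).
\]
Hence $f\circ g$ and $g\circ f$ agree on every letter occurring in $s[0..n-p-q)$. By hypothesis this prefix has length at least $p(\sigma-2)$.

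\textbf{Step 2: many distinct letters in a short prefix.} This is the main step. I will show that for every $k$ with $kp \le n$, the prefix $s[0..kp)$ contains at least $\min(\sigma,k)$ distinct letters.

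Let $A_j$ be the set of letters occurring in $s[0..jp)$. Suppose $A_{j+1}=A_j$ for some $j \ge 1$ with $(j+1)p \le n$. Since $s[i+p]=f(s[i])$, we get $f(A_j) = \Sigma_{s[p..(j+1)p)} \subseteq A_{j+1} = A_j$. Because $f$ is injective and $A_j$ is finite, this forces $f(A_j)=A_j$. By induction on blocks, every later letter $s[i+p]=f(s[i])$ then stays in $A_j$, so $A_j=\Alp[s]$. Consequently the sets $A_1 \subseteq A_2 \subseteq \cdots$ grow strictly until they reach all of $\Alp[s]$. Since $|A_1|\ge 1$, this yields $|A_k| \ge \min(\sigma,k)$.

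Now take $k=\sigma-2$; note that $(\sigma-2)p \le n-p-q \le n$. Then $s[0..n-p-q)$ contains a set $\Sigma'$ of exactly $\sigma-2$ distinct letters. For $\sigma=2$ the set $\Sigma'$ is empty, which is allowed.

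\textbf{Step 3: conclude.} Apply Lemma~\ref{lem:permutation} with alphabet $\Alp[s]$ and the subset $\Sigma'$ of size $|\Alp[s]|-2$, on which $f\circ g = g\circ f$ by Step 1. This shows that $f$ and $g$ commute. Lemma~\ref{lem:prev-p-perlem1} then finishes the proof.

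The only delicate point is the growth argument in Step 2. It needs the closure fact $f(A_j)\subseteq A_{j+1}$ together with the bijectivity of $f$ on $\Alp[s]$ to rule out stagnation before every letter has appeared.
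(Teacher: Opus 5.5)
Your proof is correct and follows essentially the route the source argument takes. The paper only cites this lemma, but Step~1 of its proof of Theorem~\ref{thm:periodicity} uses exactly your pattern: $f\circ g$ and $g\circ f$ agree on the letters of the prefix $s[0..|s|-p-q)$, a letter count from Lemma~\ref{lem:substr-character-occ}, then Lemma~\ref{lem:permutation} and Lemma~\ref{lem:prev-p-perlem1}. The only difference is that you re-derive a prefix version of Lemma~\ref{lem:substr-character-occ} with the weaker length requirement $kp$ instead of $p(k-1)+1$, and this weaker version already suffices to extract $\sigma-2$ letters from a prefix of length $p(\sigma-2)$.
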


The preceding bounds indicate that the number of distinct letters in the string is a key
parameter in parameterized periodicity. The following lemma of Hamai et al.~\cite{Hamai2024}
will be used to relate parameterized periods to the number of distinct letters in substrings.

\begin{lemma}[Lemma~4 of~\cite{Hamai2024}]\label{lem:substr-character-occ}
    Let $s$ be a string that satisfies $p \pper s$.
    For any substring $s'$ of $s$ and any integer $k$ satisfying $1 \leq k \leq |\Alp[s]| $, 
    if $|s'| \geq p \cdot (k-1) + 1$,
    then $|\Alp[s']| \geq k$ holds.
\end{lemma}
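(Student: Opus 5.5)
The statement is Lemma~4 of~\cite{Hamai2024}: if $p \pper[f] s$, then any substring $s'$ of $s$ with $|s'| \ge p(k-1)+1$ contains at least $k$ distinct letters, for every $1\le k\le |\Alp[s]|$. I would argue by induction on $k$. The base case $k=1$ is trivial, since $s'$ is nonempty. For the inductive step, suppose the claim holds for $k-1$. Let $s'$ be a substring with $|s'| \ge p(k-1)+1$, and suppose for contradiction that $|\Alp[s']| \le k-1$. Split $s' = u v$, where $u = s'[0..|s'|-p)$ has length at least $p(k-2)+1$. By induction, $u$ already contains at least $k-1$ letters. Hence $\Alp[u] = \Alp[s'] =: A$ with $|A| = k-1$.

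The key step uses the bijection $f$. For every position $i$ of $u$, the letter $s'[i+p]$ equals $f(s'[i])$, and $i+p$ lies inside $s'$. So $f$ maps $\Alp[u]=A$ into $\Alp[s'[p..])\subseteq A$. Since $f$ is injective and $A$ is finite, $f(A)=A$, so $A$ is closed under $f$. Because $f$ is a bijection on the finite set $\Alp[s]$, it follows that $f^{-1}(A)=A$ as well.

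Now I would propagate this outward to the whole string. Take any position $j$ of $s$ outside $s'$, say to the right. Write $j = i + mp$ with $i$ a position in the last window of length $p$ of $s'$. All intermediate positions $i, i+p, \dots, j$ lie in $s$, so applying the fact $f(s[t]) = s[t+p]$ repeatedly gives $s[j] = f^m(s[i]) \in A$. Positions to the left are handled symmetrically with $f^{-1}$, stepping from the first window of length $p$ of $s'$; this window exists because $|s'|\ge p+1$ whenever $k\ge 2$. Every position of $s$ lies in one of these residue chains, so every letter of $s$ is in $A$. That gives $|\Alp[s]|\le k-1<k\le|\Alp[s]|$, a contradiction.

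The main obstacle is making the chain argument rigorous at the boundaries. I need each residue class mod $p$ to meet $s'$, which holds since $|s'|\ge p$. I also need every step of the chain to stay within the range $0\le t<|s|-p$ where the fact about $f$ applies. Both hold because the chain moves monotonically from inside $s'$ toward the target position.
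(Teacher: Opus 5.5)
Your proof is correct and complete, including the index checks for the rightward and leftward chains. The paper does not prove this lemma; it imports it from Hamai et al., so there is no in-paper proof to compare against.

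The closest analogue is Step~3 of the proof of Theorem~\ref{thm:periodicity} for $\sigma\ge 3$, which uses the same mechanism as you. It first shows that $f$ maps the alphabet of a long enough window into itself; there this comes from $s[p..p+x-d)=f(z)$ with $\Sigma_{z}=\Sigma_{s_0}$, and for you from $\Sigma_{u}=\Sigma_{s'}$ via the induction hypothesis. It then propagates along residue chains modulo $p$. Because your $s'$ need not be a prefix, you also need leftward propagation with $f^{-1}$, which you handle correctly.

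One small remark: you do not need $f$ to be a bijection on the finite set $\Sigma_s$. The identity $f(A)=A$ already follows from injectivity of $f$ and finiteness of $A$, and $f^{-1}(A)=A$ follows from that.
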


The known parameterized periodicity lemmas above suggest that the dependence on the
number of distinct letters is essential. In particular, the best previous bound contains
the additive term $\min(p,q)(|\Sigma_s|-2)$. In contrast, we prove that this dependence
is governed by $d=\gcd(p,q)$ rather than by $\min(p,q)$. \section{The Parameterized Periodicity Lemma}
Throughout this section, we fix a string $s$ and let $\sigma = |\Sigma_s|$.
We now state our main result.

\begin{theorem}\label{thm:periodicity}
Let $s$ be a string such that $p \pper s$ and $q \pper s$. If $|s| \ge p+q+d(\sigma-3)+1$, then $d \pper s$, where $d=\gcd(p,q)$.
\end{theorem}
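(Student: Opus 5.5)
We prove the theorem by reducing it to Lemma~\ref{lem:prev-p-perlem1}: it suffices to show that the bijections $f$ and $g$ with $p \pper[f] s$ and $q \pper[g] s$ can be chosen, as permutations of $\Sigma_s$ (Fact~\ref{fact:restrict-alphabet}), so that they commute. Since $|s| \ge p+q+d(\sigma-3)+1 \ge p+q$ whenever $\sigma\ge 3$, commutativity then gives $d \pper s$. The cases $\sigma \le 2$ are handled separately. If $\sigma=1$, every shift is a p-period. If $\sigma=2$, the only permutations of a two-letter alphabet are the identity and the swap, and these commute. The bound $p+q-d+1$ is then what is needed; for $\sigma=2$ we apply Lemma~\ref{lem:prev-p-perlem1} directly whenever $|s|\ge p+q$. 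Otherwise $|s| = p+q-d+1 \le p+q-1$, and here we reduce to Lemma~\ref{lem:periodicity} after encoding $s$ as a binary ``equal/unequal to the letter $d$ earlier'' string, or we pass to the indicator string $s[i]\ne s[i+1]$.

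For $\sigma\ge3$, by Lemma~\ref{lem:permutation} it suffices to show $f(g(a))=g(f(a))$ for all but two letters $a\in\Sigma_s$. Take a position $i$ with $0\le i$ and $i+p+q<|s|$. Then $s[i+p+q]=f(s[i+q])=f(g(s[i]))$, and likewise $s[i+p+q]=g(f(s[i]))$. So $f\circ g$ and $g\circ f$ agree on every letter occurring in the window $s[0..|s|-p-q)$, whose length is at least $d(\sigma-3)+1$.

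The key step is to show that a window of length $d(\sigma-3)+1$ contains at least $\sigma-2$ distinct letters. Lemma~\ref{lem:substr-character-occ} only gives this with $\min(p,q)$ in place of $d$, so we need a gcd-scale version of it.

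The plan for this step is an Euclid-style descent. Write $h = f^{-1}$ on overlaps, and consider the composite maps $g^{a}f^{-b}$, which realise shifts $aq-bp$. For suitable $a,b$ with $aq-bp=d$, these maps relate $s[j]$ and $s[j+d]$ along chains of positions that stay inside $s$. This works for positions in a middle region, because $|s|\ge p+q$ lets us walk up by $q$ and down by $p$ without leaving the string, the usual Fine--Wilf walk. The result is that on the prefix window, consecutive blocks of length $d$ are related by a local bijection $\phi$, a composition of $f$'s and $g$'s.

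We then mimic the proof of Lemma~\ref{lem:substr-character-occ}. Each new block of $d$ positions either introduces a new letter or the set of letters seen so far is closed under $\phi$. In the closed case, the letters occurring in the window form a $\phi$-invariant set that propagates to the whole string, contradicting $\sigma$ letters being present. Hence a window of length $d(k-1)+1$ has at least $k$ letters, and with $k=\sigma-2$ this yields agreement of $f\circ g$ and $g\circ f$ on $\sigma-2$ letters, so Lemma~\ref{lem:permutation} makes $f$ and $g$ commute.

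The main obstacle is this gcd-scale letter-counting step. The composite bijection for shift $d$ is only defined on positions where the whole Euclid walk stays inside $s$, so near the ends the invariance argument is delicate. I would first try an induction on $p+q$ mirroring the Euclidean proof of Fine--Wilf. Replace $(p,q)$ with $q>p$ by $(p,q-p)$ on the suffix $s[p..]$: there $q-p$ is a p-period with bijection $g\circ f^{-1}$, and the length condition is preserved because $|s|-p \ge p+(q-p)+d(\sigma'-3)+1$ with $\sigma'\le\sigma$. The induction would then need to lift the conclusion $d\pper s[p..]$ back to all of $s$ using $p\pper s$. This lifting is where the $+1$ and the exact constant $\sigma-3$ must be checked carefully, especially when the suffix loses letters.
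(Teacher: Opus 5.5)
\textbf{Verdict: genuine gap.} Your outer reduction for $\sigma\ge 3$ matches the paper's. Since $f\circ g$ and $g\circ f$ agree on the letters of $s_0=s[0..x)$ with $x=d(\sigma-3)+1$, Lemmas~\ref{lem:permutation} and~\ref{lem:prev-p-perlem1} reduce everything to showing $|\Sigma_{s_0}|\ge\sigma-2$. Your $\sigma=2$ case via the indicator string is correct, and shorter than the paper's residue-class argument. The string $e[i]=s[i]\oplus s[i+1]$ has length $|s|-1$ and ordinary periods $p,q$, so Lemma~\ref{lem:periodicity} applies exactly when $|s|\ge p+q-d+1$. Period $d$ of $e$ says $s[i]\oplus s[i+d]$ is constant, i.e.\ $d\pper s$. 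Your other encoding, $s[i]\oplus s[i+d]$, does not work: it only yields period $d$ for that string, not constancy.

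The key step, however, is only a plan, and it breaks where you say the $\phi$-invariant set ``propagates to the whole string''. The Euclid walk gives $s[j+d]=h(s[j])$ only at positions with room for the walk. Moreover, the composite $h$ depends on the starting position, because the number of $p$-subtractions per step varies. So there is no single $\phi$. Invariance of a letter set under one local composite says nothing about other positions, nor about invariance under $f$ or $g$, so it cannot contradict $|\Sigma_s|=\sigma$.

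The paper supplies the missing ingredients, arguing by contradiction from $|\Sigma_{s_0}|\le\sigma-3$:
\begin{itemize}
\item[(i)] A single composite $h_r$, with $r=id<p-d$, is valid on the whole window $y_i=s[r..r+x+d)$. This is because Lemma~\ref{lem:period-d} applies to every suffix $s[b..]$ with $0\le b<x$ (each has length $\ge p+q$), and $h_r$ depends only on $f,g,p,q,r$. This uniformity is what lets you run the Lemma~\ref{lem:substr-character-occ} argument with period $d$ inside $y_i$.
\item[(ii)] Since $x\ge d|\Sigma_{s_i}|+1$, one gets $\Sigma_{y_i}=\Sigma_{s_i}$. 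Chaining the overlapping windows $y_0,\ldots,y_{p/d-2}$ then gives $\Sigma_{s[0..x+p-d)}=\Sigma_{s_0}$.
\item[(iii)] To globalise, note that $z=s[0..x-d)$ already contains all of $\Sigma_{s_0}$, by Lemma~\ref{lem:substr-character-occ} again, since $x-d=d(\sigma-4)+1$. Its image $f(z)=s[p..p+x-d)$ lies inside the controlled prefix, so $f(\Sigma_{s_0})\subseteq\Sigma_{s_0}$. Induction along the $p$-period then gives $\Sigma_s=\Sigma_{s_0}$, a contradiction.
\end{itemize}

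Your fallback induction on $p+q$ does not avoid this difficulty. From $d\pper[h] s[p..]$, the prefix $s[0..p)$ is only related by $f^{-1}hf$ (or $g^{-1}hg$). Matching these with $h$ on the letters involved is a commutation problem of the same kind you set out to solve, not a bookkeeping check of the constant $\sigma-3$ or the $+1$.
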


If $\sigma=1$, then $s$ consists of a single letter, and hence every positive
integer $r \le |s|$ is a parameterized period of $s$. Thus there is nothing to
prove in this case. In what follows, we assume $\sigma \ge 2$.

The proof is divided into two parts. We first prove the binary case, using the special
structure of bijections on a two-letter alphabet. We then prove the general case by deriving
a local $d$-step transition from the two parameterized periods and combining it with an
alphabet-counting argument.

 \subsection{The Parameterized Periodicity Lemma for $\sigma =2$}

We first prove Theorem~\ref{thm:periodicity} for $\sigma=2$.
By renaming the letters occurring in $s$, we may assume that
$\Sigma_s=\{0,1\}$. By Fact~\ref{fact:restrict-alphabet}, the associated
bijections for the parameterized periods can be regarded as permutations of
$\Sigma_s$.
Every bijection on $\{0,1\}$ is either the identity or the transposition.
Hence each such bijection is represented by a bit $\gamma\in\{0,1\}$:
\[
f(a)=a\oplus\gamma
\]
for every $a\in\{0,1\}$, where $\oplus$ denotes addition modulo two.

We will use the elementary identities
\[
((u+v)\bmod 2)\kappa
=
(u\bmod 2)\kappa\oplus(v\bmod 2)\kappa
\]
and
\[
(u\bmod 2)\oplus((u+1)\bmod 2)=1
\]
for all integers $u,v$ and all $\kappa\in\{0,1\}$.

\begin{proof}[Proof of Theorem~\ref{thm:periodicity} for $\sigma=2$]
Let $n=|s|$.
Since $p\pper s$ and $q\pper s$, there exist bits
$\alpha,\beta\in\{0,1\}$ such that $s[i+p]=s[i]\oplus\alpha$
for every $0\le i<n-p$, and
$s[i+q]=s[i]\oplus\beta$
for every $0\le i<n-q$.

Let $d=\gcd(p,q)$, and write $p=ad$ and $q=bd$ with $\gcd(a,b)=1$.
Since $\sigma=2$, the length condition gives
\[
n\ge p+q-d+1=(a+b-1)d+1.
\]
It remains to show that there exists $\kappa\in\{0,1\}$ such that
$
s[i+d]=s[i]\oplus\kappa
$
for every $0\le i<n-d$.

For each residue $r\in\{0,\ldots,d-1\}$, define
$
s_r[t]=s[r+td]
$
for all $t$ with $0\le r+td<n$, and let $m_r=|s_r|$.
Then
$
s_r[t+a]=s_r[t]\oplus\alpha
$
for every $0\le t<m_r-a$, and
$
s_r[t+b]=s_r[t]\oplus\beta
$
for every $0\le t<m_r-b$.
Moreover,
\[
m_r=\left\lfloor\frac{n-1-r}{d}\right\rfloor+1
\ge a+b-1.
\]
For $r=0$, we have $m_0\ge a+b$.

Since $s_0$ has parameterized periods $a$ and $b$, and since the two binary bijections
commute, Lemma~\ref{lem:prev-p-perlem1} implies that $1\pper s_0$.
Thus there exists $\kappa\in\{0,1\}$ such that
\[
s_0[t+1]=s_0[t]\oplus\kappa
\]
for every $0\le t<m_0-1$.

By applying this identity repeatedly, for every $0\le t<m_0-a$ we have
\[
  s_0[t+a]
  = s_0[t]\oplus
    \underbrace{\kappa\oplus\cdots\oplus\kappa}_{a}
  = s_0[t]\oplus (a\bmod 2)\kappa .
\]
On the other hand, since $s_0[t+a]=s_0[t]\oplus\alpha$ holds for every
$0\le t<m_0-a$, we obtain $\alpha=(a\bmod 2)\kappa$. Similarly,
$\beta=(b\bmod 2)\kappa$.

Now fix any residue $r$ and define
$
v_r[t]=s_r[t]\oplus(t\bmod 2)\kappa.
$
We show that $v_r$ has ordinary periods $a$ and $b$. For $0\le t<m_r-a$,
\[
\begin{aligned}
v_r[t+a]
&=s_r[t+a]\oplus((t+a)\bmod 2)\kappa\\
&=s_r[t]\oplus\alpha\oplus(t\bmod 2)\kappa\oplus(a\bmod 2)\kappa\\
&=v_r[t].
\end{aligned}
\]
Thus $a$ is a period of $v_r$. The same argument, using
$\beta=(b\bmod 2)\kappa$, shows that $b$ is also a period of $v_r$.

Since $m_r\ge a+b-1$ and $\gcd(a,b)=1$, Lemma~\ref{lem:periodicity} implies that $1$ is a
period of $v_r$. Hence
\[
v_r[t+1]=v_r[t]
\]
for every $0\le t<m_r-1$. Therefore,
\[
\begin{aligned}
s_r[t+1]
&=s_r[t]\oplus(t\bmod 2)\kappa\oplus((t+1)\bmod 2)\kappa\\
&=s_r[t]\oplus\kappa.
\end{aligned}
\]
Since this holds for every residue $r$, we obtain
\[
s[i+d]=s[i]\oplus\kappa
\]
for every $0\le i<n-d$. Therefore, $d\pper s$.
\end{proof}
 \subsection{The Parameterized Periodicity Lemma for $\sigma \ge3$}
To prove Theorem~\ref{thm:periodicity} for $\sigma\ge3$, we first establish the following lemma.

\begin{lemma}\label{lem:period-d}
Let $f$ and $g$ be permutations of an alphabet $\Sigma$, and let
$p\le q$ be positive integers. Put $d=\gcd(p,q)$.
For each $i$ with $0\le i<p-d$, there exists a permutation $h_i$,
depending only on $f,g,p,q$, and $i$, such that
\[
s[i+d]=h_i(s[i])
\]
for every string $s\in\Sigma^*$ satisfying
$p\pper[f]s$, $q\pper[g]s$, and $|s|\ge p+q$.
\end{lemma}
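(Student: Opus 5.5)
Since $d=\gcd(p,q)$, Bézout gives $d=uq-vp$ with integers $u,v\ge 0$. The plan is to reach position $i+d$ from $i$ by a walk of steps $+q$ and $-p$ that stays inside $[0,|s|)$. Each step carries a known letter map: a forward step $j\mapsto j+q$ (valid when $j<|s|-q$) gives $s[j+q]=g(s[j])$, and a backward step $j\mapsto j-p$ (valid when $j\ge p$) gives $s[j-p]=f^{-1}(s[j])$. Composing these maps along the walk yields a permutation $h_i$ with $s[i+d]=h_i(s[i])$. The maps depend only on $f,g$ and the step sequence, and the step sequence will be chosen depending only on $p,q,i$. So $h_i$ depends only on $f,g,p,q,i$ and not on $s$, provided every step is legal for every $s$ with $|s|\ge p+q$.

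We choose the walk greedily. Starting at $j=i$, repeat: if $j<p$, step forward to $j+q$; otherwise ($j\ge p$), step backward to $j-p$. Stop the first time $j=i+d$. Legality is checked as follows.
\begin{itemize}
\item If $j<p$, then $j+q<p+q\le |s|$, so the forward step is legal.
\item If $j\ge p$, then $j-p\ge 0$, so the backward step is legal.
\item All visited positions lie in $[0,p+q)$, hence inside $s$.
\end{itemize}
Indeed, one can check inductively that the walk stays in $[0,p+q)$: from $j<p$ we go to $j+q<p+q$, and from $j\in[p,p+q)$ we go to $j-p\in[0,q)$.

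It remains to show the walk actually hits $i+d$. Here we use $i<p-d$, so $i+d<p$; in fact any value in $[0,p+q)$ suffices. The greedy rule is exactly the dynamics of $x\mapsto x+q \pmod{p+q}$ on the residues $\{0,\dots,p+q-1\}$. To see this, note that from $j<p$ we go to $j+q$, and from $j\ge p$ we go to $j-p\equiv j+q \pmod{p+q}$. Now $\gcd(q,p+q)=d$, so the orbit of $i$ under this map is exactly $\{x\in[0,p+q): x\equiv i \pmod d\}$. Since $i+d$ is congruent to $i$ mod $d$ and lies in $[0,p+q)$, the walk reaches it after finitely many steps (at most $(p+q)/d$). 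The number of steps, and the forward/backward pattern, depend only on $p,q,i$.

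The only delicate step is this orbit argument. One must make sure the position used really is a reduction mod $p+q$, so that the cyclic-group fact applies, and that $i+d$ lies in the admissible window; both are handled above. The restriction $i<p-d$ is stronger than needed, but it is harmless. Finally, $h_i$ is a composition of the permutations $g$ and $f^{-1}$, so it is itself a permutation.
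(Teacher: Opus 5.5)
Your proof is correct and follows essentially the same route as the paper. Your greedy walk (one $+q$ step from a position in $[0,p)$, then $-p$ steps until you are back in $[0,p)$) is exactly the paper's sequence $a_{t+1}=a_t+q-k_tp$, and it yields the same permutation $h_i=(f^{-k_{T-1}}\circ g)\circ\cdots\circ(f^{-k_0}\circ g)$. The only difference is in showing that $i+d$ is reached: the paper uses the first-return map $a\mapsto (a+q)\bmod p$ on $[0,p)$ and solves $Tq\equiv d\pmod p$, whereas you view the whole walk as the rotation $x\mapsto x+q \pmod{p+q}$, whose orbits are the residue classes mod $d$; this is a slightly cleaner packaging that also shows, as you note, that the lemma holds for all $0\le i<p+q-d$.
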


\begin{proof}
If $q$ is a multiple of $p$, then $d=\gcd(p,q)=p$ and the index range $0\le i < p-d$ is empty, so the statement is trivial.
Hence we assume that $q$ is not a multiple of $p$.

Define $a_0=i$. For each $t\ge0$, let $k_t\in\mathbb{Z}_{\ge0}$ and
$a_{t+1}\in\{0,\ldots,p-1\}$ be uniquely determined by
\[
a_t+q=k_t p+a_{t+1}.
\]
Then $a_{t+1}\equiv a_t+q \pmod p$ for all $t\ge 0$.
By induction on $t$, we obtain
\[
  a_t\equiv a_0+tq \pmod p
\]
for every $t\ge 0$.

We claim that there exists $T\ge 0$ such that $a_T=a_0+d$.
Indeed, writing $p=dn$ and $q=dm$ with $\gcd(n,m)=1$, the congruence $a_T\equiv a_0+d\pmod p$ is equivalent to
\[
  a_0+Tq\equiv a_0+d \pmod p
  \Longleftrightarrow
  Tq\equiv d \pmod p
  \Longleftrightarrow
  Tm\equiv 1 \pmod n .
\]
Since $\gcd(n,m)=1$, such a $T$ exists.
Let $T_{\min}$ be the smallest $T\ge 0$ such that $a_T\equiv a_0+d\pmod p$.
Since $0\le a_{T_{\min}}<p$ and $0\le a_0+d<p$, we have $a_{T_{\min}}=a_0+d$.

Next, for each $t\ge 0$, we relate $s[a_{t+1}]$ and $s[a_t]$.
From $q \pper[g] s$, we have
\[
  s[a_t+q]=g(s[a_t]).
\]
This application of the $q$-period is valid because
\[
a_t<p\le |s|-q.
\]
Also, all applications of the $p$-period are valid. Indeed, for every $0 \le \ell < k_t$, we have
\[
a_{t+1}+(\ell+1)p \le a_{t+1}+k_t p = a_t+q < p+q \le |s|.
\]
Hence
\[
a_{t+1}+\ell p < |s|-p.
\]
Therefore, by applying the $p$-period $k_t$ times, we obtain
\[
s[a_t+q]=s[a_{t+1}+k_t p]=f^{k_t}(s[a_{t+1}]).
\]
Hence,
\[
s[a_{t+1}]=f^{-k_t}\circ g(s[a_t]).
\]
By iterating this equation, we obtain
\[
  s[a_0+d]=s[a_{T_{\min}}]
  =
  (f^{-k_{T_{\min}-1}}\circ g)
  \circ \cdots \circ
  (f^{-k_0}\circ g)(s[a_0]).
\]
Define
\[
  h_i :=
  (f^{-k_{T_{\min}-1}}\circ g)
  \circ \cdots \circ
  (f^{-k_0}\circ g).
\]
Then $s[i+d]=h_i(s[i])$.
Moreover, the construction of $h_i$ depends only on $f,g,p,q$, and $i$.
\end{proof}

With the above lemma established, we are now ready to prove Theorem~\ref{thm:periodicity}.

\begin{proof}[Proof of Theorem~\ref{thm:periodicity} for $\sigma\ge3$]
By Fact~\ref{fact:restrict-alphabet}, choose permutations $f$ and $g$ of
$\Sigma_s$ such that $p \pper[f] s$ and $q \pper[g] s$.
By symmetry, assume $p\le q$.
If $q$ is a multiple of $p$, then $d=p$, and hence $d \pper s$ is immediate.
Thus, assume that $q$ is not a multiple of $p$.

The remainder of the proof consists of three steps.
We first bound the alphabet of a short prefix of $s$, then propagate this
alphabet, and finally show that it coincides with the alphabet of the entire
string.

\smallskip
\noindent\textbf{Step~1 (Bounding $|\Sigma_{s_0}|$).}
We first bound the number of distinct letters occurring in the prefix $s_0$.
Since $\sigma\ge3$, the length condition gives
\[
|s| > p+q.
\]
Therefore, by Lemma~\ref{lem:prev-p-perlem1}, it suffices to prove that the bijections
$f$ and $g$ commute.

Suppose, for a contradiction, that $f$ and $g$ do not commute.

Put $x=d(\sigma-3)+1$ and $s_0=s[0..x).$
Since $|s|\ge p+q+x$, for every $a\in\Sigma_{s_0}$ we have
\[
(f\circ g)(a)=(g\circ f)(a).
\]
Indeed, $(f\circ g)(s[i])=s[i+q+p]$ and $(g\circ f)(s[i])=s[i+p+q]$
for every $0\le i<x$.
Since $f$ and $g$ do not commute, the contrapositive of Lemma~\ref{lem:permutation}, applied to the alphabet
$\Sigma_s$, gives
\[
|\Sigma_{s_0}| \leq |\Sigma_s|-3 = \sigma-3.
\]
If $\sigma=3$, then $x=1$, and hence $s_0$ is non-empty.
This contradicts $|\Sigma_{s_0}|\le 0$.

Thus, in the rest of the proof, we assume $\sigma\ge4$.

\smallskip
\noindent\textbf{Step~2 (Propagating $\Sigma_{s_0}$).}
We next show that the alphabet of $s_0$ propagates to a longer prefix.
Since $|s|\ge p+q+x$, for every $b$ with $0\le b<x$, the suffix $s[b..]$
has length at least $p+q$. Moreover, each such suffix inherits the same
$p$-period and $q$-period with the same bijections $f$ and $g$. Hence
Lemma~\ref{lem:period-d} can be applied to $s[b..]$ for every
$0\le b<x$.
For each integer $i$ with $0\le i\le p/d-2$, put
\[
r_i=i \cdot d,\qquad
s_i=s[r_i..r_i+x),\qquad
y_i=s[r_i..r_i+x+d).
\]
The relative positions of these intervals are illustrated in Figure~\ref{fig:s_iy_i}.
\begin{figure}[t]
  \centering
  \includegraphics[keepaspectratio,width=\linewidth]{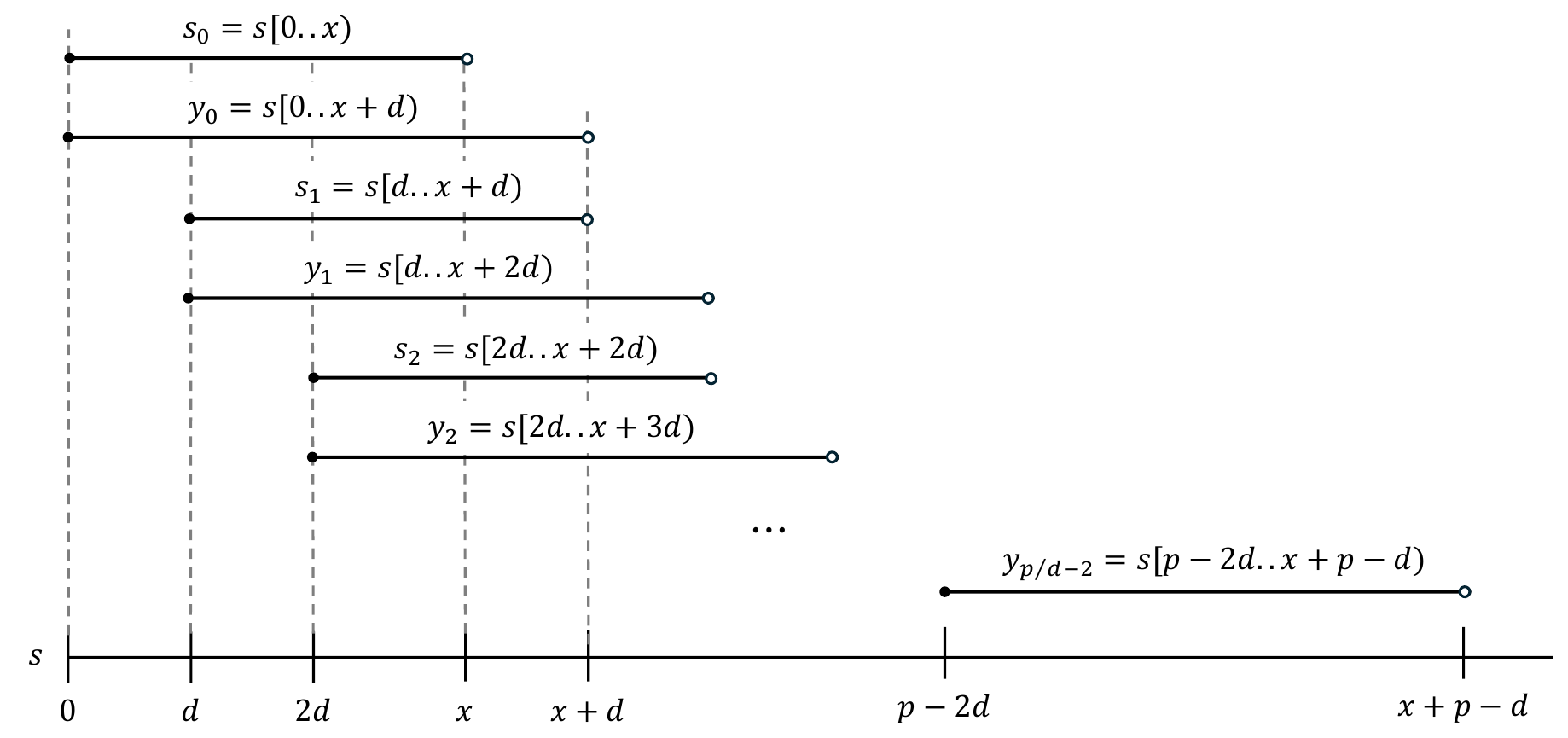}
  \caption{The intervals $s_i=s[id..id+x)$ and $y_i=s[id..id+x+d)$ used in Section~3.2. Each $y_i$ extends $s_i$ by $d$ positions, and the intervals are shifted by $d$ positions as $i$ increases. }
  \label{fig:s_iy_i}
\end{figure}

We first show that $d \pper[h_{r_i}] y_i .$
Let $j$ be any position with $r_i\le j<r_i+x$, and put $b=j-r_i$.
Then $0\le b<x$.
Applying Lemma~\ref{lem:period-d} to the suffix $s[b..]$ and considering relative position $r_i$ (See Figure~\ref{fig:suffix}), we obtain
\[
  s[j+d]
  =
  s[b+r_i+d]
  =
  h_{r_i}(s[b+r_i])
  =
  h_{r_i}(s[j]).
\]
Here the same bijection $h_{r_i}$ is obtained for all choices of $b$, because $h_{r_i}$ depends only on $f,g,p,q$, and $r_i$.
Therefore, $d \pper[h_{r_i}] y_i .$

\begin{figure}[t]
  \centering
  \includegraphics[keepaspectratio,width=\linewidth]{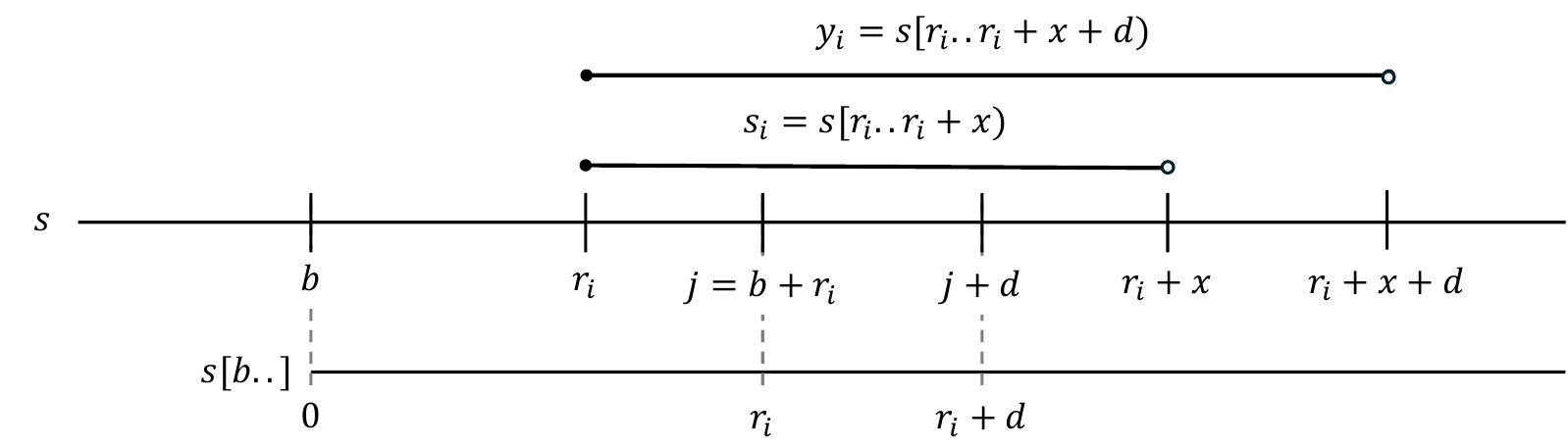}
  \caption{Applying Lemma~\ref{lem:period-d} to suffixes. For a position $j\in[r_i,r_i+x)$, we put $b=j-r_i$. Then the position $j=b+r_i$ in $s$ corresponds to relative position $r_i$ in the suffix $s[b..]$. Hence Lemma~\ref{lem:period-d} gives $s[j+d]=h_{r_i}(s[j])$, which proves $d\pper[h_{r_i}] y_i$.
}
  \label{fig:suffix}
\end{figure}
Since $d \pper[h_{r_i}] y_i$, we have
\[
s_{i+1}=h_{r_i}(s_i)
  \qquad
  (0\le i\le p/d-3).
\]
Hence
\[
  |\Alp[s_{i+1}]|=|\Alp[s_i]|
  \qquad
  (0\le i\le p/d-3),
\]
and therefore
\[
  |\Alp[s_i]|=|\Alp[s_0]|
  \qquad
  (0\le i\le p/d-2).
\]
In particular, $|\Alp[s_i]|\le \sigma-3$.

We next show that
\[
  |\Alp[y_i]|=|\Alp[s_i]|
  \qquad
  (0\le i\le p/d-2)
\]
by using Lemma~\ref{lem:substr-character-occ}.
Since $s_i$ is a substring of $y_i$, suppose to the contrary that
\[
  |\Alp[y_i]|\ge |\Alp[s_i]|+1.
\]
Then Lemma~\ref{lem:substr-character-occ} is applicable to $y_i$ and $s_i$ with $k=|\Alp[s_i]|+1.$
Indeed, $1\le k\le |\Alp[y_i]|$, and
\[
  |s_i|
  =
  x
  =
  d(\sigma-3)+1
  \ge
  d|\Alp[s_i]|+1
  =
  d(k-1)+1.
\]
Thus Lemma~\ref{lem:substr-character-occ} gives
\[
  |\Alp[s_i]|\ge k=|\Alp[s_i]|+1,
\]
a contradiction.
Therefore,
\[
  |\Alp[y_i]|=|\Alp[s_i]|=|\Alp[s_0]|
  \qquad
  (0\le i\le p/d-2).
\]
Since $s_0$ is a substring of $y_0$ and $|\Alp[y_0]|=|\Alp[s_0]|$, we have
\[
  \Alp[y_0]=\Alp[s_0].
\]
Now we show that $\Alp[y_i]=\Alp[y_{i+1}]$ for every $0\le i\le p/d-3$.
The two substrings
\[
  y_i=s[r_i..r_i+x+d),
  \qquad
  y_{i+1}=s[r_i+d..r_i+x+d+d)
\]
overlap exactly on
\[
  s_{i+1}=s[r_i+d..r_i+x+d).
\]
Hence
\[
  \Alp[s_{i+1}]
  \subseteq
  \Alp[y_i]\cap\Alp[y_{i+1}].
\]
Since
\[
  |\Alp[y_i]|
  =
  |\Alp[y_{i+1}]|
  =
  |\Alp[s_{i+1}]|
  =
  |\Alp[s_0]|,
\]
we obtain $\Alp[y_i]=\Alp[y_{i+1}]$.
Therefore,
\[
  \Alp[y_0]=\Alp[y_1]=\cdots=\Alp[y_{p/d-2}]=\Alp[s_0].
\]
Moreover, since $r_i=id$, the substrings $y_i$ cover the interval
\[
[0,x+p-d).
\]
Indeed, $y_0$ starts at position $0$, and the last substring
$y_{p/d-2}$ ends at position
\[
r_{p/d-2}+x+d
=
(p/d-2)d+x+d
=
x+p-d.
\]
Consecutive substrings overlap because $y_i$ ends at $r_i+x+d$ whereas
$y_{i+1}$ starts at $r_i+d$, and $x\ge1$.
Therefore,
\[
\Sigma_{s[0..x+p-d)}=\Sigma_{s_0}.
\]

\smallskip
\noindent\textbf{Step~3 (Showing $\Sigma_s=\Sigma_{s_0}$).}
Finally, we show that $\Sigma_s=\Sigma_{s_0}$.
To this end, we first prove that $f$ preserves $\Sigma_{s_0}$.
Put
\[
  z=s[0..x-d).
\]
Since $z$ is a substring of $y_0$, $d \pper[h_0] y_0$, and
\[
  |z|
  =
  x-d
  =
  d(\sigma-4)+1
  \ge
  d(|\Alp[s_0]|-1)+1,
\]
Lemma~\ref{lem:substr-character-occ} applied to $y_0$ and $z$ with $k=|\Alp[s_0]|$ implies
\[
  |\Alp[z]|\ge |\Alp[s_0]|.
\]
Since $z$ is a substring of $s_0$, we have
\[
  \Alp[z]=\Alp[s_0].
\]
Since $p \pper[f] s$, we have
\[
  s[p..p+x-d)=f(z).
\]
The substring $s[p..p+x-d)$ is contained in $s[0..x+p-d)$, and hence
\[
  f(\Alp[s_0])\subseteq \Alp[s_0].
\]

We finally show that every letter of $s$ belongs to $\Sigma_{s_0}$.
We already know that
\[
\Sigma_{s[0..x+p-d)}=\Sigma_{s_0}.
\]
We prove by induction on $t$ that
\[
s[t]\in\Sigma_{s_0}
\]
for every position $t\ge x+p-d$.

Let $t\ge x+p-d$, and assume that the claim holds for all positions
smaller than $t$. Since $p\pper[f]s$, we have
\[
s[t]=f(s[t-p]).
\]
If $t-p<x+p-d$, then
\[
s[t-p]\in\Sigma_{s_0}
\]
by the prefix condition. Otherwise, since $t-p<t$, the induction hypothesis gives
\[
s[t-p]\in\Sigma_{s_0}.
\]
In both cases, we have $s[t-p]\in\Sigma_{s_0}$. Since
\[
f(\Sigma_{s_0})\subseteq\Sigma_{s_0},
\]
it follows that
\[
s[t]\in\Sigma_{s_0}.
\]
Therefore,
\[
\Sigma_s=\Sigma_{s_0}.
\]
This contradicts $|\Sigma_s|=\sigma$ and $|\Sigma_{s_0}|\le\sigma-3$.
Hence $f$ and $g$ commute. 
By Lemma~\ref{lem:prev-p-perlem1}, we obtain $d \pper s$.
\end{proof}

 \section{Tightness of the bound}
We show that the length bound in Theorem~\ref{thm:periodicity} is tight.
The case $\sigma=1$ is degenerate, since every positive shift at most $|s|$
is a parameterized period. Hence the tightness question is meaningful only for $\sigma \ge 2$.
For every $\sigma\ge2$, we construct a string $s$ with exactly $\sigma$ distinct letters
and integers $p,q$ such that
\[
p\pper s,\qquad q\pper s,\qquad |s|=p+q+d(\sigma-3),
\]
where $d=\gcd(p,q)$, but $d\not\pper s$.
Thus the additive constant $+1$ in Theorem~\ref{thm:periodicity} cannot be removed.

\begin{theorem}For any $\sigma \ge 2$, there exist matching lower-bound instances for
Theorem~\ref{thm:periodicity}.
\end{theorem}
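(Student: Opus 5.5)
The plan is to give explicit witnesses. Since the statement only asks that, for each $\sigma\ge2$, \emph{some} string $s$ and some $p,q$ attain length exactly $p+q+d(\sigma-3)$ with $p\pper s$, $q\pper s$ and $d\not\pper s$, I will fix small parameters, namely $p=2$, $q=3$ and $d=1$. The target length is then $\sigma+2$. I will split into $\sigma=2$ and $\sigma\ge3$, mirroring the two halves of the proof of Theorem~\ref{thm:periodicity}. I have fully checked only the binary witness; for $\sigma\ge3$ below I give a design strategy, not a verified construction.

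\textbf{Case $\sigma=2$.} Here the target length is $p+q-d=4$, and I take $s=\mathtt{0011}$. Shift $2$ is realised by the transposition, since $s[2]=s[3]=1$ and $s[0]=s[1]=0$. Shift $3$ is realised by the transposition, since $s[3]\ne s[0]$. Shift $1$ fails, because $s[1]=s[0]$ forces the identity while $s[2]\ne s[1]$ forces the transposition. This is the binary analogue of the extremal Fine--Wilf words, shifted by one because the parameterized setting loses exactly one position, as in the $\sigma=2$ proof.

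\textbf{Case $\sigma\ge3$.} The upper-bound proof shows what a witness must look like. By Lemma~\ref{lem:prev-p-perlem1}, the bijections $f$ and $g$ must fail to commute, so by Lemma~\ref{lem:permutation} they disagree on at least three letters. The prefix $s[0..d(\sigma-3))$ must avoid those letters. Lemma~\ref{lem:substr-character-occ} also forces new letters to appear at least every $d$ positions. I will therefore choose $s$ whose prefix of length $d(\sigma-3)$ introduces $\sigma-3$ letters. On these letters $f$ and $g$ act compatibly, say as a common $d$-step shift of a chain $c_1\to c_2\to\cdots$, so that $f$ and $g$ commute there. The last $p+q$ positions of $s$ then contain a copy of a small non-commuting gadget on three further letters. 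I want the two relations $s[i+p+q]=f(g(s[i]))$ and $s[i+p+q]=g(f(s[i]))$ never to be tested on the gadget letters, since the string ends exactly before that would happen. Concretely, for $p=2$, $q=3$ I plan to search over strings of length $\sigma+2$ of the form $c_1\cdots c_{\sigma-3}$ followed by a length-$5$ block on three or four letters. The requirements are: the $2$- and $3$-shift constraints each define an injective partial map, and the $1$-shift constraints are not injective. I will then verify these three properties directly from the definition, as in the binary case.

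The main obstacle is this $\sigma\ge3$ design. Naively appending fresh letters to the binary witness breaks injectivity of the $2$-shift map; for example, $\mathtt{0011}$ followed by fresh letters forces $1\mapsto c_3$ and $1\mapsto c_4$. The chain letters must therefore be fed in at the front, and the gadget must be placed so that each letter occurs in at most one conflicting constraint. If the fixed choice $p=2$, $q=3$ proves too rigid, I will fall back to general $d$ by replacing each letter position with a block of $d$ positions. Residue classes modulo $d$ then carry independent copies, but they share the bijections, which reproduces the factor $d(\sigma-3)$.
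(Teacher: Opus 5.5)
\textbf{The $\sigma\ge3$ case is not proved.} Your binary witness $\mathtt{0011}$ with $p=2$, $q=3$ is correct. The paper uses $a_1a_2a_2a_1$ instead, blown up to $a_1^na_2^na_2^na_1^n$, but either works. For $\sigma\ge3$, which is most of the theorem, you exhibit no string and no bijections, only a plan to search.

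Moreover, the search space you describe contains no witness once $\sigma\ge6$. With $p=2$, $q=3$ and $|s|=\sigma+2$, a coincidence $s[i]=s[j]$ propagates to $s[i\pm2]=s[j\pm2]$ and $s[i\pm3]=s[j\pm3]$ whenever these positions exist, because $f$ and $g$ are bijections. Steps $2$ and $3$ connect any interval of at least $4$ positions. A string of length $\sigma+2$ with $\sigma$ distinct letters has at most three coinciding pairs. Hence every coinciding pair must satisfy $j-i\ge\sigma-1$. Any two positions of your final length-$5$ block differ by at most $4$, so for $\sigma\ge6$ that block has five distinct letters. The two that are not new must repeat letters from positions $0,1,2$, which lie in the prefix. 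So a tail ``on three or four letters'' cannot work. The idea you are missing is that the tail has to wrap around and reuse the first letters.

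That is exactly the paper's witness: $s=a_1a_2\cdots a_\sigma a_2a_1$ with $p=2$, $q=3$. For general $d=n$ it is $a_1^n a_2^n\cdots a_\sigma^n a_2^n a_1^n$ with $p=2n$, $q=3n$. The periods are realised by:
\begin{itemize}
\item $f(a_i)=a_{i+2}$ for $i\le\sigma-2$, $f(a_{\sigma-1})=a_2$, $f(a_\sigma)=a_1$;
\item $g(a_i)=a_{i+3}$ for $i\le\sigma-3$, $g(a_{\sigma-2})=a_2$, $g(a_{\sigma-1})=a_1$, $g(a_\sigma)=a_3$.
\end{itemize}
Shift $d$ fails because $a_2$ (resp.\ the block $a_2^n$) is followed once by $a_3$ and once by $a_1$.

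Your chain-plus-gadget heuristic is consistent with this shape: for these $f,g$, $f\circ g$ and $g\circ f$ agree on the chain $a_1,\dots,a_{\sigma-3}$. But the proof is complete only once such an explicit string is written down and its three properties are checked.
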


\begin{proof}
Let $n$ be any positive integer, and put $p=2n$, $q=3n$, and
$d=\gcd(p,q)=n$.

First, consider the case $\sigma=2$. Let
\[
s=a_1^n a_2^n a_2^n a_1^n.
\]
Then $|\Sigma_s|=2$ and $|s|=4n$. Moreover, $s$ satisfies
$2n \pper[f] s$ and $3n \pper[g] s$, where
\[
f(a_1)=a_2,\quad f(a_2)=a_1,
\]
and
\[
g(a_1)=a_1,\quad g(a_2)=a_2.
\]
However, $n$ is not a p-period of $s$. Indeed, if $n \pper[h] s$ for some
bijection $h$, then the first block $a_1^n$ and the second block $a_2^n$
force $h(a_1)=a_2$. The second block $a_2^n$ and the third block $a_2^n$
force $h(a_2)=a_2$. The third block $a_2^n$ and the fourth block $a_1^n$
force $h(a_2)=a_1$, a contradiction.

Finally,
\[
|s|=4n=2n+3n+(\sigma-3)n=p+q+(\sigma-3)d.
\]

Next, consider the case $\sigma \ge 3$. Let
\[
s=a_1^n a_2^n \cdots a_\sigma^n a_2^n a_1^n.
\]
Then $|\Sigma_s|=\sigma$ and $|s|=(\sigma+2)n$. Moreover, $s$ satisfies
$2n \pper[f] s$ and $3n \pper[g] s$, where
\[
f(a_i)=a_{i+2}\quad (1 \le i \le \sigma-2),\quad
f(a_{\sigma-1})=a_2,\quad f(a_\sigma)=a_1,
\]
and
\[
g(a_i)=a_{i+3}\quad (1 \le i \le \sigma-3),\quad
g(a_{\sigma-2})=a_2,\quad
g(a_{\sigma-1})=a_1,\quad g(a_\sigma)=a_3.
\]
However, $n$ is not a p-period of $s$. Indeed, if $n \pper[h] s$ for some
bijection $h$, then the block $a_2^n$ is compared with the block $a_3^n$,
forcing $h(a_2)=a_3$. On the other hand, the penultimate block is also
$a_2^n$, and it is compared with the final block $a_1^n$, forcing
$h(a_2)=a_1$. This is a contradiction.

Finally,
\[
|s|=(\sigma+2)n=2n+3n+(\sigma-3)n=p+q+(\sigma-3)d.
\]
Thus, for every $\sigma \ge 2$, we have constructed a string $s$ with
$|\Sigma_s|=\sigma$ and parameterized periods $p$ and $q$ such that
\[
|s|=p+q+(\sigma-3)d
\]
but $d\not\pper s$. This gives a matching lower-bound instance.
\end{proof}

\subsubsection*{Acknowledgements}
This work was supported by JST BOOST Grant Number JPMJBS2406 (RH),
and by JSPS KAKENHI Grant Numbers JP25K00136 (YN), JP23K24808, JP23K18466 (SI).

\bibliographystyle{abbrv}
\bibliography{ref}

\end{document}